\theoremstyle{definition}
\newtheorem{theorem}{Theorem}
\newtheorem{example}{Example}
\newtheorem{lemma}{Lemma}
\theoremstyle{remark}
\begin{document}
\tikzstyle{line} = [draw]
\tikzstyle{arrow} = [draw, ->]
\tikzstyle{vertex} = [draw, circle, minimum size=0.8cm]
\tikzstyle{vertexg} = [draw, circle, minimum size=0.8cm, fill=black!25]
\title{Unpopularity Factor in the Marriage and Roommates Problems\thanks{A preliminary version of this paper \cite{ruangwises2} has appeared at CSR 2019.}}
\author[1]{Suthee Ruangwises\thanks{\texttt{ruangwises.s.aa@m.titech.ac.jp}}}
\author[1]{Toshiya Itoh\thanks{\texttt{titoh@c.titech.ac.jp}}}
\affil[1]{Department of Mathematical and Computing Science, Tokyo Institute of Technology, Tokyo, Japan}
\date{}
\maketitle

\begin{abstract}
Given a set $A$ of $n$ people, with each person having a \textit{preference list} that ranks a subset of $A$ as his/her acceptable partners in order of preference, we consider the \textit{Roommates Problem} (\textsc{rp}) and the \textit{Marriage Problem} (\textsc{mp}) of matching people with their partners. In \textsc{rp} there is no further restriction, while in \textsc{mp} only people of opposite genders can be acceptable partners. For a pair of matchings $X$ and $Y$, let $\phi(X,Y)$ denote the number of people who prefer a person they get matched by $X$ to a person they get matched by $Y$, and define an \textit{unpopularity factor} $u(M)$ of a matching $M$ to be the maximum ratio $\phi(M',M) / \phi(M,M')$ among all other possible matchings $M'$. In this paper, we develop an algorithm to compute the unpopularity factor of a given matching in $O(m\sqrt{n}\log^2 n)$ time for \textsc{rp} and in $O(m\sqrt{n}\log n)$ time for \textsc{mp}, where $m$ is the total length of people's preference lists. We also generalize the notion of unpopularity factor to a weighted setting where people are given different voting weights and show that our algorithm can be slightly modified to support that setting with the same running time.

\textbf{Keywords:} unpopularity factor, popular matching, perfect matching, Marriage Problem, Roommates Problem
\end{abstract}

\section{Introduction}
The \textit{Stable Marriage Problem} is one of the most actively studied problems in theoretical computer science and economics \cite{gusfield,roth}. It has many real-world applications including assignments of medical residents \cite{roth0} and high-school students \cite{abdulkadiroglu1,abdulkadiroglu2}. In the original bipartite setting called \textit{Marriage Problem} (\textsc{mp}), a set of $n/2$ men and a set of $n/2$ women are given. Each person has a \textit{preference list} that ranks all people of opposite gender in strict order of preference. A man $m$ and a woman $w$ are called a \textit{blocking pair} for a matching $M$ if they are not matched with each other in $M$ but prefer each other to their own partners in $M$. A matching is called \textit{stable} if it does not admit any blocking pair. Gale and Shapley \cite{gale} proved that a stable matching always exists in any instance and developed an $O(n^2)$ algorithm to find one. Their algorithm can be adapted to a setting where each person's preference list may not contain all people of opposite gender. It runs in $O(m)$ time in this setting, where $m$ is the total length of people's preference lists \cite{gusfield}.

The \textit{Stable Roommates Problem} is a generalization of the original Stable Marriage Problem to a non-bipartite setting called \textit{Roommates Problem} (\textsc{rp}), where each person can be matched with anyone regardless of gender. Unlike in \textsc{mp}, a stable matching in \textsc{rp} does not always exist. Irving \cite{irving} developed an $O(n^2)$ algorithm to find a stable matching in a given \textsc{rp} instance, or report that none exists.

\subsection{Popular Matchings}
Apart from stability, another well-studied property of a preferable matching is popularity. For a pair of matchings $X$ and $Y$, let $\phi(X,Y)$ denote the number of people who prefer a person they get matched by $X$ to a person they get matched by $Y$. A matching $M$ is called \textit{popular} if $\phi(M,M') \geq \phi(M',M)$ for any other matching $M'$. The concept of popularity of a matching was first introduced by Gardenfors \cite{gardenfors} in the context of the original Stable Marriage Problem. He also proved that in an \textsc{mp} instance where each person's preference list is \textit{strict} (containing no tie), every stable matching must be popular (but not vice-versa), hence a popular matching always exists.

The problem of determining whether a popular matching exists in a given instance, however, becomes more computationally challenged in other settings. Bir\'{o} et al. \cite{biro} proved that when ties among people in the preference lists are allowed, the problem of determining whether a popular matching exists in a given \textsc{mp} or \textsc{rp} instance is NP-hard. Very recently, Faenza et al. \cite{faenza} and Gupta et al. \cite{gupta} independently proved that this problem is still NP-hard for \textsc{rp} even when people's preference lists are strict. Cseh and Kavitha \cite{cseh} showed that in a complete graph \textsc{rp} instance where each person's preference list is strict and contains all other people, the problem of determining whether a popular matching exists is solvable in polynomial time for an odd $n$ but is NP-hard for an even $n$.

Problems related to popular matchings were also extensively studied in the setting of one-sided preference lists (matching each person with a unique item, where each person has a list that ranks items but each item does not have a list that ranks people) called \textit{House Allocation Problem} (\textsc{hap}). Abraham et al. \cite{abraham} developed an algorithm to find a popular matching in a given \textsc{hap} instance, or report that none exists. The algorithm runs in $O(m+n)$ time when people's preference lists are strict and in $O(m\sqrt{n})$ time when ties are allowed, where $m$ is the total length of people's preference lists and $n$ is the total number of people and items. Mestre \cite{mestre} later generalized their algorithm to a weighted setting where people are given different voting weights, while Manlove and Sng \cite{manlove} generalized it to a setting where each item is allowed to be matched with more than one person called \textit{Capacitated House Allocation Problem} (\textsc{chap}). Mahdian \cite{mahdian} studied the randomized version of this problem where people's preference lists are strict, \textit{complete} (containing all items), and randomly generated. He showed that a popular matching exists with high probability in a random \textsc{hap} instance if the ratio of the number of items to the number of people is greater than a specific constant. Ruangwises and Itoh \cite{ruangwises} later generalized Mahdian's study to the case where preference lists are strict but not complete, and found a similar behavior of the probability of existence of a popular matching. Abraham and Kavitha \cite{abraham-kavitha} proved that in any instance with at least one popular matching, one can achieve a popular matching by conducting at most two majority votes to force a change in assignments, starting at any matching. Kavitha et al. \cite{kavitha} introduced the concept of a \textit{mixed matching}, which is a probability distribution over a set of matchings, and proved that a mixed matching that is ``popular'' always exists.

\subsection{Unpopularity Measures}
While a popular matching may not exist in some instances, several measures of badness  of a matching that is not popular have been introduced. In the one-sided preference lists setting, McCutchen \cite{mccutchen} introduced two such measures: the \textit{unpopularity factor} and the \textit{unpopularity margin}. The unpopularity factor $u(M)$ of a matching $M$ is the maximum ratio $\phi(M',M) / \phi(M,M')$ among all other possible matchings $M'$, while the unpopularity margin $g(M)$ is the maximum difference $\phi(M',M) - \phi(M,M')$ among all other possible matchings $M'$. Note that the two measures are not equivalent as $\phi(M',M)$ and $\phi(M,M')$ may not add up to the total number of people since some people may like $M$ and $M'$ equally, thus it is possible for a matching to have higher unpopularity factor but lower unpopularity margin than another matching. See Example 1.

McCutchen \cite{mccutchen} developed an algorithm to compute $u(M)$ and $g(M)$ of a given matching $M$ of an $\textsc{hap}$ instance in $O(m\sqrt{n_2})$ and $O((g+1)m\sqrt{n})$ time, respectively, where $n_2$ is the number of items and $g = g(M)$ is the unpopularity margin of $M$. He also proved that the problem of finding a matching that minimizes either measure is NP-hard. Huang et al. \cite{huang1} later developed an algorithm to find a matching with bounded values of these measures in $\textsc{hap}$ instances with certain properties.

The notions of unpopularity factor and unpopularity margin also apply to the setting of two-sided preference lists (matching people with people). Bir\'{o} et al. \cite{biro} developed an algorithm to determine whether a given matching $M$ is popular in $O(m\sqrt{n \alpha(n,m)}\log^{3/2} n)$ time for \textsc{rp}, where $\alpha$ is the inverse Ackermann function (later improved to $O(m\sqrt{n}\log n)$ time when running with the recent fastest algorithm to find a maximum weight perfect matching \cite{duan}), and in $O(m\sqrt{n})$ time for \textsc{mp}. Their algorithm also simultaneously computes the unpopularity margin of $M$ during the run. Huang and Kavitha \cite{huang} proved that an \textsc{rp} instance with strict preference lists always has a matching with unpopularity factor $O(\log n)$, and it is NP-hard to find a matching with the lowest unpopularity factor, or even the one with less than $4/3$ times of the optimum.

\begin{example}
Consider the following \textsc{rp} instance. A set in a preference list means all people in that set are ranked equally, e.g. $a_2$ prefers $a_1$ and $a_4$ equally as his first choices over $a_3$.

\begin{figure}[H]
    \centering
    \begin{minipage}{0.3\textwidth}
        \underline{Preference Lists} \\
        $\boldsymbol{a_1:} \hspace{0.2cm} a_4, a_2, a_3$ \\
        $\boldsymbol{a_2:} \hspace{0.2cm} \{a_1, a_4\}, a_3$ \\
        $\boldsymbol{a_3:} \hspace{0.2cm} \{a_1, a_4\}, a_2$ \\
        $\boldsymbol{a_4:} \hspace{0.2cm} \{a_2, a_3\}, a_1$ \\
    \end{minipage}
    \begin{minipage}{0.3\textwidth}
        $M_0 = \{\{a_1,a_2\}, \{a_3,a_4\}\}$ \\
				$M_1 = \{\{a_1,a_3\}, \{a_2,a_4\}\}$ \\
        $M_2 = \{\{a_1,a_4\}, \{a_2,a_3\}\}$ \\
    \end{minipage}
\end{figure}

In this example, $\phi(M_0,M_1) = 1$, $\phi(M_1,M_0) = 0$, $\phi(M_0,M_2) = 3$, $\phi(M_2,M_0) = 1$, $\phi(M_1,M_2) = 3$, and $\phi(M_2,M_1) = 1$. Therefore, $M_0$ is popular, while $u(M_1) = \infty$, $g(M_1) = 1-0 = 1$, $u(M_2) = 3/1 = 3$, and $g(M_2) = 3-1 = 2$. Observe that $M_1$ has higher unpopularity factor but lower unpopularity margin than $M_2$. \qed
\end{example}

\subsection{Our Contribution}
The algorithm of Bir\'{o} et al. \cite{biro} determines whether a given matching $M$ is popular and also simultaneously computes the unpopularity margin of $M$, hence we currently have an algorithm to compute an unpopularity margin of a given matching in $O(m\sqrt{n}\log n)$ time for \textsc{rp} and in $O(m\sqrt{n})$ time for \textsc{mp}. However, there is currently no efficient algorithm to compute an unpopularity factor of a given matching in \textsc{mp} or \textsc{rp}.

In this paper, by employing an auxiliary graph similar to the one in \cite{biro}, we develop an algorithm to compute the unpopularity factor of a given matching. The algorithm runs in $O(m\sqrt{n}\log^2 n)$ time for \textsc{rp} and in $O(m\sqrt{n}\log n)$ time for \textsc{mp}. We also generalize the notion of unpopularity factor to the weighted setting where people are given different voting weights, and show that our algorithm can be slightly modified to support that setting with the same running time.

\section{Preliminaries}
Let $I$ be an \textsc{rp} or \textsc{mp} instance consisting of a set $A=\{a_1,...,a_n\}$ of $n$ people, with each person having a preference list that ranks a subset of $A$ as his/her acceptable partners in order of preference. In \textsc{rp} there is no further restriction, while in \textsc{mp} people are classified into two genders, and each person's preference list can contain only people of opposite gender. Throughout this paper, we consider a more general setting where ties among two or more people are allowed in the preference lists. Also, let $m$ be the total length of people's preference lists.

For a matching $M$ and a person $a \in A$, let $M(a)$ be the person matched with $a$ in $M$ (for convenience, let $M(a) = null$ if $a$ is unmatched in $M$). Also, let $r_a(b)$ be the rank of a person $b$ in $a$'s preference list, with the most preferred item(s) having rank 1, the second most preferred item(s) having rank 2, and so on (for convenience, let $r_a(null) = \infty$).

For any pair of matchings $X$ and $Y$, we define $\phi(X,Y)$ to be the number of people who strictly prefer the person they get matched by $X$ to the person they get matched by $Y$, i.e.
$$\phi(X,Y) = |\{a \in A|r_a(X(a)) < r_a(Y(a))\}|.$$
Also, let
$$
\Delta(X,Y) = \begin{cases}
\phi(Y,X) / \phi(X,Y), &\text{if } \phi(X,Y) > 0; \\
1, &\text{if } \phi(X,Y) = \phi(Y,X) = 0; \\
\infty, &\text{otherwise.}
\end{cases}
$$
Finally, define an unpopularity factor
$$u(M) = \max_{M' \in \mathbb{M}-\{M\}} \Delta(M,M'),$$
where $\mathbb{M}$ is the set of all matchings of a given instance $I$. Note that a matching $M$ is popular if and only if $u(M) \leq 1$.

\section{Unweighted Setting}
We first consider an unweighted setting where every person has equal voting weight.
 
\subsection{\textsc{rp} Instances}
Let $I$ be an \textsc{rp} instance, $M$ be a matching of $I$, and $k$ be an arbitrary positive rational number. Beginning with a similar approach to \cite{biro}, we construct an undirected graph $H_{(M,k)}$ with vertices $A \cup A'$, where $A' = \{a'_1,...,a'_n\}$ is a set of ``copies'' of people in $A$. An edge $\{a_i,a_j\}$ exists if and only if $a_i$ is in $a_j$'s preference list and $a_j$ is in $a_i$'s preference list; an edge $\{a'_i,a'_j\}$ exists if and only if $\{a_i,a_j\}$ exists; an edge $\{a_i,a'_j\}$ exists if and only if $i=j$.

The major distinction of our algorithm is that we assign weights to edges of $H_{(M,k)}$ differently from \cite{biro}. For each pair of $i$ and $j$ with an edge $\{a_i,a_j\}$, define $\delta_{i,j}$ as follows.
$$
\delta_{i,j} = \begin{cases}
1, &\text{if } a_i \text{ is unmatched in } M \text{ or } a_i \text{ prefers } a_j \text{ to } M(a_i); \\
-k, &\text{if } a_i \text{ prefers } M(a_i) \text{ to } a_j; \\
0, &\text{if } \{a_i,a_j\} \in M \text{ or } a_i \text{ likes } a_j \text{ and } M(a_i) \text{ equally}.
\end{cases}
$$
For each pair of $i$ and $j$, we set the weights of both $\{a_i,a_j\}$ and $\{a'_i,a'_j\}$ to be $\delta_{i,j}+\delta_{j,i}$. Finally, for each edge $\{a_i,a'_i\}$, we set its weight to be $-2k$ if $a_i$ is matched in $M$, and 0 otherwise. See Example 2.

\begin{example}
Consider the following matching $M$ in an \textsc{rp} instance.

\begin{figure}[H]
    \centering
    \begin{minipage}{0.3\textwidth}
				\underline{Preference Lists} \\
        $\boldsymbol{a_1:} \hspace{0.2cm} a_2, a_3, a_4$ \\
        $\boldsymbol{a_2:} \hspace{0.2cm} a_3, a_1$ \\
        $\boldsymbol{a_3:} \hspace{0.2cm} a_1, a_2, a_4$ \\
        $\boldsymbol{a_4:} \hspace{0.2cm} a_1, a_3$ \\
				
				$M = \{(a_1,a_2), (a_3,a_4)\}$ \\
    \end{minipage}
    \begin{minipage}{0.5\textwidth}
    	\centering
				\begin{tabular}{|c|c||c|c|c|c|} \hline
            \multicolumn{2}{|c||}{\multirow{2}{*}{$\delta_{i,j}$}} & \multicolumn{4}{c|}{$j$} \\ \cline{3-6}
            \multicolumn{2}{|c||}{} & \hspace{0.1cm}1\hspace{0.1cm} & \hspace{0.1cm}2\hspace{0.1cm} & \hspace{0.1cm}3\hspace{0.1cm} & \hspace{0.1cm}4\hspace{0.1cm} \\ \hline \hline
            \multirow{4}{*}{\hspace{0.1cm}$i$\hspace{0.1cm}} & \hspace{0.1cm}1\hspace{0.1cm} &  & 0 & -2 & -2 \\ \cline{2-6}
            & \hspace{0.1cm}2\hspace{0.1cm} & 0 &  & 1 &  \\ \cline{2-6}
            & \hspace{0.1cm}3\hspace{0.1cm} & 1 & 1 &  & 0 \\ \cline{2-6}
            & \hspace{0.1cm}4\hspace{0.1cm} & 1 &  & 0 &  \\ \hline
        \end{tabular}
		\end{minipage}
		
		\begin{minipage}{0.5\textwidth}
			\centering
        \begin{tikzpicture}[node distance=1.386cm, auto]
            \node [circle] (a0) {};
            \node [vertex, left of=a0, node distance=0.8cm] (a1) {$a_1$};
            \node [vertex, above of=a0] (a2) {$a_2$};
            \node [vertex, right of=a0, node distance=0.8cm] (a3) {$a_3$};
            \node [vertex, below of=a0] (a4) {$a_4$};
            \node [circle, right of=a0, node distance=3.2cm] (b0) {};
            \node [vertex, right of=b0, node distance=0.8cm] (b1) {$a'_1$};
            \node [vertex, above of=b0] (b2) {$a'_2$};
            \node [vertex, left of=b0, node distance=0.8cm] (b3) {$a'_3$};
            \node [vertex, below of=b0] (b4) {$a'_4$};
    
            \path [line] (a1) -- node{0} (a2);
            \path [line] (a2) -- node{2} (a3);
            \path [line] (a3) -- node{0} (a4);
            \path [line] (a4) -- node{-1} (a1);
            \path [line] (a1) -- node{-1} (a3);
            \path [line] (b2) -- node{0} (b1);
            \path [line] (b3) -- node{2} (b2);
            \path [line] (b4) -- node{0} (b3);
            \path [line] (b1) -- node{-1} (b4);
            \path [line] (b3) -- node{-1} (b1);
            \path [line, bend right=90] (a1) to [out=-30, in=-150] node{-4} (b1);
            \path [line] (a2) -- node{-4} (b2);
            \path [line] (a3) -- node{-4} (b3);
            \path [line] (a4) -- node{-4} (b4);
        \end{tikzpicture}
	\end{minipage}
\end{figure}

The values of all $\delta_{i,j}$ are shown in the top-right table, and the auxiliary graph $H_{(M,2)}$ is shown at the bottom. \qed
\end{example}

The intuition of constructing this auxiliary graph is that we want to check whether $u(M) > k$, i.e. whether there exists another matching $M'$ with the number of people who prefer $M'$ to $M$ more than $k$ times the number of those who prefer $M$ to $M'$. Each matching $M'$ is represented by a perfect matching of $H_{(M,k)}$ consisting of the edges of $M'$ in $A$ as well as their copies in $A'$, with each unmatched person $a_i$ being matched with his own copy $a'_i$. We want to add 1 for each person who prefers $M'$ to $M$ and subtract $k$ for each one who prefers $M$ to $M'$, and then check whether the sum is positive. This is the reason that we set $\delta_{i,j}$ to be $-k$ if $a_i$ prefers $M(a_i)$ to $a_j$, and the weight of $\{a_i,a'_i\}$ to be $-2k$ if $a_i$ is unmatched in $M$.

The relation between $u(M)$ and the graph $H_{(M,k)}$ is formally shown in the following lemma.

\begin{lemma} \label{lem1}
$u(M) > k$ if and only if $H_{(M,k)}$ contains a positive weight perfect matching.
\end{lemma}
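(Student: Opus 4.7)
The plan is to set up a weight-preserving bridge between matchings of $I$ and perfect matchings of $H_{(M,k)}$ via the quantity $\Phi(M') := \phi(M', M) - k\phi(M, M')$. A short unfolding of the definitions first shows that $u(M) > k$ if and only if some matching $M'$ of $I$ satisfies $\Phi(M') > 0$: the inequality $\Phi(M') > 0$ automatically forces $\phi(M', M) > 0$, hence $M' \notin \mathbb{M}_M$, so the ``exclude $\mathbb{M}_M$'' clause in the definition of $u(M)$ is taken care of for free, whether $\Delta(M,M')$ is finite or infinite. It therefore suffices to prove that $H_{(M,k)}$ has a positive-weight perfect matching if and only if some matching $M'$ has $\Phi(M') > 0$.

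For the ``only if'' direction, I would associate to each matching $M'$ of $I$ the perfect matching
\[
N(M') = \bigl\{\{a_i, a_j\}, \{a'_i, a'_j\} : \{a_i, a_j\} \in M'\bigr\} \cup \bigl\{\{a_i, a'_i\} : a_i \text{ unmatched in } M'\bigr\}
\]
of $H_{(M,k)}$, and verify by a per-person accounting of the edge weights that each $a_i$ contributes exactly $+1$, $-k$, or $0$ to $w(N(M'))/2$ according as $a_i$ prefers $M'$ to $M$, prefers $M$ to $M'$, or is indifferent between them. Summing gives $w(N(M')) = 2\Phi(M')$; the choice of $-2k$ (rather than $-k$) on the copy edges is exactly the calibration needed so that an unmatched person's vote against $M'$ is counted once in the $A$-half and once in the $A'$-half. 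Applied to any $M'$ witnessing $u(M) > k$, this yields a positive-weight perfect matching.

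The ``if'' direction is the main obstacle, because a perfect matching $N$ of $H_{(M,k)}$ need not be of the symmetric form $N(M')$: the graph splits into an $A$-side and an $A'$-side joined only through the copy edges $\{a_i,a'_i\}$, so the two halves of $N$ may match the same set $S \subseteq A$ in completely different ways. My approach is to decompose $N = N_A \sqcup N_{A'} \sqcup N_c$ into edges inside $A$, edges inside $A'$, and copy edges, respectively. Because each copy edge joins only $a_i$ to $a'_i$, the set covered by $N_A$ inside $A$ and the set covered by $N_{A'}$ inside $A'$ correspond under the identification $a_i \leftrightarrow a'_i$, so $N_A$ and $N_{A'}$ define two legitimate matchings $M'_A$ and $M'_{A'}$ of $I$, both supported on the same $S \subseteq A$.

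Running the same per-person accounting separately on each half then gives
\[
\Phi(M'_A) = w(N_A) + \tfrac{1}{2} w(N_c), \qquad \Phi(M'_{A'}) = w(N_{A'}) + \tfrac{1}{2} w(N_c),
\]
and adding yields $\Phi(M'_A) + \Phi(M'_{A'}) = w(N) > 0$, so at least one of $M'_A, M'_{A'}$ satisfies $\Phi > 0$ and witnesses $u(M) > k$. The only computations requiring care are the case split defining $\delta_{i,j}$ (especially when $M(a_i)$ or $M'(a_i)$ equals $null$) and the verification that the per-person accounting cleanly partitions the contributions between the $A$-half and the $A'$-half so that the factor-of-two normalization is the same in both directions.
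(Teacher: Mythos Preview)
Your proposal is correct and follows essentially the same approach as the paper: the symmetric construction $N(M')$ for one direction, and for the converse the decomposition of an arbitrary perfect matching into an $A$-half matching $M'_A$ and an $A'$-half matching $M'_{A'}$ with the averaging argument $\Phi(M'_A)+\Phi(M'_{A'})=w(N)>0$. One small slip: after restating the biconditional as ``$H_{(M,k)}$ has a positive-weight perfect matching iff some $M'$ has $\Phi(M')>0$,'' you have the ``if'' and ``only if'' labels reversed (your ``only if'' paragraph actually proves the ``if'' direction and vice versa), but the mathematics is right.
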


\begin{proof}
For any matching $M'$, define $A_1(M')$ to be a set of people in $A$ that are matched in $M'$, and $A_2(M')$ to be a set of people in $A$ that are unmatched in $M'$. Also, define
\begin{align*}
A^+_1(M') &= \{a_i \in A_1(M')|a_i\text{ is unmatched in }M\text{ or }a_i\text{ prefers }M'(a_i)\text{ to }M(a_i)\}; \\
A^-_1(M') &= \{a_i \in A_1(M')|a_i\text{ prefers }M(a_i)\text{ to }M'(a_i)\}; \\
A^-_2(M') &= \{a_i \in A_2(M')|a_i\text{ is matched in }M\}.
\end{align*}
We have $\phi(M',M) = |A^+_1(M')|$ and $\phi(M,M') = |A^-_1(M')|+|A^-_2(M')|$.
						
Suppose that $u(M) > k$. From the definition of $u(M)$, there must be a matching $M_0$ such that $\phi(M_0,M) > k\phi(M,M_0)$. In the graph $H_{(M,k)}$, consider a perfect matching
$$S_0 = M_0 \cup \{\{a'_i,a'_j\}|\{a_i,a_j\} \in M_0\} \cup \{\{a_i,a'_i\}|a_i\text{ is unmatched in }M_0\}$$
with weight $W_0$. From the definition, we have
\begin{align*}
W_0 &= 2\left(|A^+_1(M_0)|-k|A^-_1(M_0)|\right)-2k|A^-_2(M_0)| \\
&= 2\left(|A^+_1(M_0)|-k\left(|A^-_1(M_0)|+|A^-_2(M_0)|\right)\right) \\
&= 2(\phi(M_0,M) - k\phi(M,M_0)) \\
&> 0,
\end{align*}
hence $H_{(M,k)}$ contains a positive weight perfect matching.
						
On the other hand, suppose there is a positive weight perfect matching $S_1$ of $H_{(M,k)}$ with weight $W_1$. See Example 3. Let $M_1 =$ $\{\{a_i,a_j\} \in S_1\}$ and $M_2 = \{\{a_i,a_j\}|\{a'_i,a'_j\} \in S_1\}$. Since $S_1$ is a perfect matching of $H_{(M,k)}$, we have $A_2(M_1) = A_2(M_2)$, and
\begin{align*}
0 &< W_1 \\
&= \left(|A^+_1(M_1)|-k|A^-_1(M_1)|\right) + \left(|A^+_1(M_2)|-k|A^-_1(M_2)|\right) - 2k|A^-_2(M_1)| \\
&= \left(|A^+_1(M_1)|-k|A^-_1(M_1)|\right) + \left(|A^+_1(M_2)|-k|A^-_1(M_2)|\right) - k|A^-_2(M_1)|-k|A^-_2(M_2)| \\
&= (\phi(M_1,M) - k\phi(M,M_1)) + (\phi(M_2,M) - k\phi(M,M_2)).
\end{align*}
Therefore, we have either $\phi(M_1,M) > k\phi(M,M_1)$ or $\phi(M_2,M) > k\phi(M,M_2)$, which implies $u(M) > k$.
\end{proof}

\begin{example}
Consider the auxiliary graphs $H_{(M,2)}$ and $H_{(M,3)}$ constructed from a matching $M$ in Example 2.

\begin{figure}[H]
	\begin{minipage}{0.5\textwidth}
		\centering
        \begin{tikzpicture}[node distance=1.386cm, auto]
            \node [circle] (a0) {};
            \node [vertex, left of=a0, node distance=0.8cm] (a1) {$a_1$};
            \node [vertex, above of=a0] (a2) {$a_2$};
            \node [vertex, right of=a0, node distance=0.8cm] (a3) {$a_3$};
            \node [vertex, below of=a0] (a4) {$a_4$};
            \node [circle, right of=a0, node distance=3.2cm] (b0) {};
            \node [vertex, right of=b0, node distance=0.8cm] (b1) {$a'_1$};
            \node [vertex, above of=b0] (b2) {$a'_2$};
            \node [vertex, left of=b0, node distance=0.8cm] (b3) {$a'_3$};
            \node [vertex, below of=b0] (b4) {$a'_4$};
            
            \path [line] (a1) -- node{0} (a2);
            \path [line, line width=2pt] (a2) -- node{2} (a3);
            \path [line] (a3) -- node{0} (a4);
            \path [line, line width=2pt] (a4) -- node{-1} (a1);
            \path [line] (a1) -- node{-1} (a3);
            \path [line] (b2) -- node{0} (b1);
            \path [line, line width=2pt] (b3) -- node{2} (b2);
            \path [line] (b4) -- node{0} (b3);
            \path [line, line width=2pt] (b1) -- node{-1} (b4);
            \path [line] (b3) -- node{-1} (b1);
            \path [line, bend right=90] (a1) to [out=-30, in=-150] node{-4} (b1);
            \path [line] (a2) -- node{-4} (b2);
            \path [line] (a3) -- node{-4} (b3);
            \path [line] (a4) -- node{-4} (b4);
        \end{tikzpicture}
	\end{minipage}
	\begin{minipage}{0.5\textwidth}
		\centering
        \begin{tikzpicture}[node distance=1.386cm, auto]
            \node [circle] (a0) {};
            \node [vertex, left of=a0, node distance=0.8cm] (a1) {$a_1$};
            \node [vertex, above of=a0] (a2) {$a_2$};
            \node [vertex, right of=a0, node distance=0.8cm] (a3) {$a_3$};
            \node [vertex, below of=a0] (a4) {$a_4$};
            \node [circle, right of=a0, node distance=3.2cm] (b0) {};
            \node [vertex, right of=b0, node distance=0.8cm] (b1) {$a'_1$};
            \node [vertex, above of=b0] (b2) {$a'_2$};
            \node [vertex, left of=b0, node distance=0.8cm] (b3) {$a'_3$};
            \node [vertex, below of=b0] (b4) {$a'_4$};
            
            \path [line] (a1) -- node{0} (a2);
            \path [line] (a2) -- node{2} (a3);
            \path [line] (a3) -- node{0} (a4);
            \path [line] (a4) -- node{-2} (a1);
            \path [line] (a1) -- node{-2} (a3);
            \path [line] (b2) -- node{0} (b1);
            \path [line] (b3) -- node{2} (b2);
            \path [line] (b4) -- node{0} (b3);
            \path [line] (b1) -- node{-2} (b4);
            \path [line] (b3) -- node{-2} (b1);
            \path [line, bend right=90] (a1) to [out=-30, in=-150] node{-6} (b1);
            \path [line] (a2) -- node{-6} (b2);
            \path [line] (a3) -- node{-6} (b3);
            \path [line] (a4) -- node{-6} (b4);
        \end{tikzpicture}
	\end{minipage}
\end{figure}

On the left, $H_{(M,2)}$ has a positive weight perfect matching consisting of the bold-faced edges, but on the right, $H_{(M,3)}$ does not. This implies $2 < u(M) \leq 3$. \qed
\end{example}

For a given value of $k$, the problem of determining whether $u(M) > k$ is now transformed to detecting a positive weight perfect matching of $H_{(M,k)}$, which can be done by finding the maximum weight perfect matching of $H_{(M,k)}$.

\begin{lemma} \label{lem2}
Given an \textsc{rp} instance $I$, a matching $M$ of $I$, and a rational number $k = x/y$, where $x \in [0,n-1]$ and $y \in [1,n]$ are integers, there is an algorithm to determine whether $u(M) > k$ in $O(m\sqrt{n}\log n)$ time.
\end{lemma}

\begin{proof}
From Lemma \ref{lem1}, the problem of determining whether $u(M) > k$ is equivalent to determining whether $H_{(M,k)}$ has a positive weight perfect matching. Observe that $H_{(M,k)}$ has $O(n)$ vertices and $O(m)$ edges, and we can multiply the weights of all edges by $y$ so that they are all integers with magnitude $O(n)$. Using the recent algorithm of Duan et al. \cite{duan}, we can find a maximum weight perfect matching in a graph with integer weight edges of magnitude $\text{poly}(n)$ in $O(m\sqrt{n}\log n)$ time, hence we can detect a positive weight perfect matching in $H_{(M,k)}$ in $O(m\sqrt{n}\log n)$ time.
\end{proof}

As the possible values of $u(M)$ are limited, we can perform a binary search for its value. This allows us to efficiently compute $u(M)$. To the best of our knowledge, this is the first approach on popular matchings that employs the binary search technique.

\begin{theorem} \label{thm1}
Given an \textsc{rp} instance $I$ and a matching $M$ of $I$, there is an algorithm to compute $u(M)$ in $O(m\sqrt{n}\log^2 n)$ time.
\end{theorem}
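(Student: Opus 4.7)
My plan is to compute $u(M)$ by $O(\log n)$ invocations of Lemma \ref{lem2}, performing a binary search over the discrete set of its possible values.

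First I would observe that any finite value of $u(M)$ is attained as $\phi(M',M)/\phi(M,M')$ for some $M'$ with $\phi(M,M') \ge 1$ and $\phi(M',M) \le n-1$ (indeed, if $\phi(M',M) = n$, then every person prefers $M'$ to $M$, forcing $\phi(M,M')=0$ and $u(M) = \infty$). Thus $u(M)$ lies in
\[ K := \{\, x/y : 0 \le x \le n-1,\ 1 \le y \le n \,\}, \]
a set of $O(n^2)$ rationals, each of them an admissible argument to Lemma \ref{lem2}.

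The algorithm first invokes Lemma \ref{lem2} on $k = (n-1)/1$ to detect $u(M) = \infty$. Otherwise it binary-searches on $K$: maintain an interval known to contain $u(M)$, initialized to $[0, n-1]$, and at each step (i) let $k^*$ be the median of $K$ restricted to the current interval, (ii) query Lemma \ref{lem2} on $k^*$, and (iii) shrink the interval to its upper half if $u(M) > k^*$ and to its lower half otherwise. Since each iteration at least halves the number of remaining candidates, $O(\log|K|) = O(\log n)$ iterations suffice until only one candidate remains, which must equal $u(M)$. The oracle calls cost $O(\log n) \cdot O(m\sqrt{n}\log n) = O(m\sqrt{n}\log^2 n)$ in total.

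The key technical step is selecting the median $k^*$ without explicitly enumerating $K$, which has $\Theta(n^2)$ elements and would overshoot the time budget on sparse inputs. I would exploit the structure of $K$ as the union of $n$ arithmetic progressions indexed by the denominator $y$: for any threshold $v$, the count $|\{k \in K : k \le v\}|$ equals $\sum_{y=1}^{n}(\min(n-1,\lfloor vy \rfloor) + 1)$, computable in $O(n)$ time. Combining this $O(n)$-time rank oracle with a standard selection procedure for a union of sorted sequences (or a nested binary search driven by the rank oracle) yields the median in $O(n\log n)$ time per iteration, contributing only $O(n \log^2 n)$ overall, which is absorbed by $O(m\sqrt{n}\log^2 n)$. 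The main obstacle is precisely this efficient median-selection step; once it is in place, the rest is a routine binary search treating Lemma \ref{lem2} as a monotone comparison oracle.
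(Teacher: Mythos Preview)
Your proposal is correct and follows exactly the same strategy as the paper: binary search over the $O(n^2)$ candidate rationals $x/y$ using Lemma~\ref{lem2} as a monotone oracle. You are in fact more careful than the paper on one point: the paper simply asserts that the binary search costs $O(\log n)$ oracle calls and stops, whereas you observe that na\"ively listing and sorting the $\Theta(n^2)$ candidates would cost $\Theta(n^2\log n)$ and could dominate $O(m\sqrt{n}\log^2 n)$ on sparse instances, and you supply an $O(n\log n)$-per-iteration rank/selection argument to locate the median without enumeration.
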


\begin{proof}
Observe that if $u(M)$ is not $\infty$, it must be in the form of $x/y$, where $x \in [0,n-1]$ and $y \in [1,n]$ are integers, meaning that there are at most $n^2$ possible values of $u(M)$. By performing a binary search on the value of $k = x/y$ (if $u(M) > n-1$, then $u(M) = \infty$), we run the algorithm in Lemma \ref{lem2} to determine whether $u(M) > k$ for $O(\log n^2) = O(\log n)$ times to find the exact value of $u(M)$, hence the total running time is $O(m\sqrt{n}\log^2 n)$.
\end{proof}

\subsection{\textsc{mp} Instances}
The running time of the algorithm in Theorem \ref{thm1} is for a general \textsc{rp} instance. However, in an \textsc{mp} instance we can improve it using the following approach. For any matching $M$ in an \textsc{mp} instance, we define a matching
$$S = M \cup \{\{a'_i,a'_j\}|\{a_i,a_j\} \in M\} \cup \{\{a_i,a'_i\}|a_i\text{ is unmatched in }M\}$$
in the graph $H_{(M,k)}$. Since $S$ is a perfect matching, for any perfect matching $S'$ of $H_{(M,k)}$, every edge of $S'$ that is not in $S$ must be a part of some cycle in which the edges alternate between $S$ and $S'$. Moreover, from the definition of $\delta_{i,j}$, every edge of $S$ has zero weight. Therefore, $H_{(M,k)}$ contains a positive weight perfect matching if and only if it contains a positive weight alternating cycle relative to $S$. Hence, the problem becomes equivalent to detecting a positive weight alternating cycle (relative to $S$) in $H_{(M,k)}$. Note that this property holds for every \textsc{rp} instance, not limited to only \textsc{mp}.

However, the special property of \textsc{mp} is that $A$ is bipartite. Let $A_M$ and $A_W$ be the two parts of $A$ with no edge between vertices in the same part (which correspond to the sets of men and women, respectively). Also, let $A'_M = \{a'_i|a_i \in A_M\}$ and $A'_W = \{a'_i|a_i \in A_W\}$. Observe that we can divide the vertices of $H_{(M,k)}$ into two parts $H_1 = A_M \cup A'_W$ and $H_2 = A_W \cup A'_M$ with no edge between vertices in the same part, so $H_{(M,k)}$ is also bipartite. In $H_{(M,k)}$, we orient the edges of $S$ toward $H_2$ and all other edges toward $H_1$, hence the problem of detecting a positive weight alternating cycle becomes equivalent to detecting a positive weight directed cycle (see Example 4), which can be done in $O(m\sqrt{n})$ time using the shortest path algorithm of Goldberg \cite{goldberg}. Therefore, by performing a binary search on the value of $u(M)$ similar to in \textsc{rp}, the total running time for \textsc{mp} is $O(m\sqrt{n}\log n)$.

\begin{example}
Consider the following matching $M'$ in an \textsc{mp} instance with men $a_1$ and $a_3$, and women $a_2$ and $a_4$.

\begin{figure}[H]
	\hspace{0.18\textwidth}
	\begin{minipage}{0.3\textwidth}
	\underline{Preference Lists} \\
        $\boldsymbol{a_1:} \hspace{0.2cm} a_2, a_4$ \\
				$\boldsymbol{a_2:} \hspace{0.2cm} a_3, a_1$ \\
        $\boldsymbol{a_3:} \hspace{0.2cm} a_2, a_4$ \\
        $\boldsymbol{a_4:} \hspace{0.2cm} a_1, a_3$ \\ \\
	\end{minipage}
	\begin{minipage}{0.34\textwidth}
			$A_M = \{a_1,a_3\}$ \\
			$A_W = \{a_2,a_4\}$ \\ \\
			$M' = \{\{a_1,a_2\}, \{a_3,a_4\}\}$ \\ \\
	\end{minipage}
	
	\begin{minipage}{0.5\textwidth}
		\centering
        \begin{tikzpicture}[node distance=1.386cm, auto]
            \node [circle] (a0) {};
            \node [vertex, left of=a0, node distance=0.8cm] (a1) {$a_1$};
            \node [vertex, above of=a0] (a2) {$a_2$};
            \node [vertex, right of=a0, node distance=0.8cm] (a3) {$a_3$};
            \node [vertex, below of=a0] (a4) {$a_4$};
            \node [circle, right of=a0, node distance=3.2cm] (b0) {};
            \node [vertex, right of=b0, node distance=0.8cm] (b1) {$a'_1$};
            \node [vertex, above of=b0] (b2) {$a'_2$};
            \node [vertex, left of=b0, node distance=0.8cm] (b3) {$a'_3$};
            \node [vertex, below of=b0] (b4) {$a'_4$};
            
            \path [line, dashed] (a1) -- node{0} (a2);
            \path [line, line width=2pt] (a2) -- node{2} (a3);
            \path [line, dashed] (a3) -- node{0} (a4);
            \path [line, line width=2pt] (a4) -- node{-1} (a1);
            \path [line, dashed] (b2) -- node{0} (b1);
            \path [line, line width=2pt] (b3) -- node{2} (b2);
            \path [line, dashed] (b4) -- node{0} (b3);
            \path [line, line width=2pt] (b1) -- node{-1} (b4);
            \path [line, bend right=90] (a1) to [out=-30, in=-150] node{-4} (b1);
            \path [line] (a2) -- node{-4} (b2);
            \path [line] (a3) -- node{-4} (b3);
            \path [line] (a4) -- node{-4} (b4);
        \end{tikzpicture}
	\end{minipage}
	\begin{minipage}{0.5\textwidth}
		\centering
        \begin{tikzpicture}[node distance=1.386cm, auto]
            \node [circle] (a0) {};
            \node [vertex, left of=a0, node distance=0.8cm] (a1) {$a_1$};
            \node [vertexg, above of=a0] (a2) {$a_2$};
            \node [vertex, right of=a0, node distance=0.8cm] (a3) {$a_3$};
            \node [vertexg, below of=a0] (a4) {$a_4$};
            \node [circle, right of=a0, node distance=3.2cm] (b0) {};
            \node [vertexg, right of=b0, node distance=0.8cm] (b1) {$a'_1$};
            \node [vertex, above of=b0] (b2) {$a'_2$};
            \node [vertexg, left of=b0, node distance=0.8cm] (b3) {$a'_3$};
            \node [vertex, below of=b0] (b4) {$a'_4$};
            
            \path [arrow, dashed, line width=2pt] (a1) -- node{0} (a2);
            \path [arrow, line width=2pt] (a2) -- node{2} (a3);
            \path [arrow, dashed, line width=2pt] (a3) -- node{0} (a4);
            \path [arrow, line width=2pt] (a4) -- node{-1} (a1);
            \path [arrow, dashed] (b2) -- node{0} (b1);
            \path [arrow] (b3) -- node{2} (b2);
            \path [arrow, dashed] (b4) -- node{0} (b3);
            \path [arrow] (b1) -- node{-1} (b4);
            \path [arrow, bend left=90] (b1) to [out=30, in=150] node[above]{-4} (a1);
            \path [arrow] (a2) -- node{-4} (b2);
            \path [arrow] (b3) -- node[above]{-4} (a3);
            \path [arrow] (a4) -- node{-4} (b4);
        \end{tikzpicture}
	\end{minipage}
\end{figure}

On the left, $H_{(M',2)}$ has a positive weight perfect matching consisting of the bold-faced edges, while $S$ consists of the dotted edges.

On the right, since $H_{(M',2)}$ is a bipartite graph with parts $H_1 = \{a_1,a_3,a'_2,a'_4\}$ (white vertices) and $H_2 = \{a_2,a_4,a'_1,a'_3\}$ (gray vertices), we orient the edges of $S$ (dotted arrows) toward $H_2$, and the rest toward $H_1$. This directed graph has a positive weight directed cycle consisting of the bold-faced arrows. Both figures imply $u(M') > 2$. \qed
\end{example}

In a way similar to \textsc{rp}, we have the following lemma and theorem for \textsc{mp}.

\begin{lemma} \label{lem3}
Given an \textsc{mp} instance $I$, a matching $M$ of $I$, and a number $k = x/y$, where $x \in [0,n-1]$ and $y \in [1,n]$ are integers, there is an algorithm to determine whether $u(M) > k$ in $O(m\sqrt{n})$ time.
\end{lemma}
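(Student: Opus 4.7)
The plan is to implement the reduction sketched immediately before the lemma and then invoke a known subroutine for detecting positive weight directed cycles. By Lemma~\ref{lem1}, the task is equivalent to deciding whether $H_{(M,k)}$ admits a positive weight perfect matching, so the first job is to recast this as a cycle-detection problem.

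I would start by fixing the reference perfect matching $S = M \cup \{\{a'_i,a'_j\} : \{a_i,a_j\} \in M\} \cup \{\{a_i,a'_i\} : a_i \text{ is unmatched in } M\}$ of $H_{(M,k)}$. Directly from the definitions of $\delta_{i,j}$ and the self-copy weights, every edge of $S$ has weight zero. For any other perfect matching $T$, the symmetric difference $S \triangle T$ decomposes into disjoint alternating cycles (relative to $S$), and since $S$-edges contribute nothing, the weight of $T$ equals the sum of weights of those alternating cycles. Hence $H_{(M,k)}$ has a positive weight perfect matching if and only if it contains a positive weight alternating cycle relative to $S$.

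The bipartite structure of MP then converts alternating cycles into directed cycles. Because acceptable pairs only link opposite genders, and the only remaining edges are the self-copy pairs $\{a_i,a'_i\}$, I can place men together with copies of women in one part $H_1$ and women together with copies of men in the other part $H_2$. I would orient every edge of $S$ from $H_1$ to $H_2$ and every non-$S$ edge from $H_2$ to $H_1$; with this orientation, alternating cycles with respect to $S$ are exactly directed cycles, of the same total weight. So the question becomes: does this directed graph on $O(n)$ vertices and $O(m)$ arcs contain a positive weight directed cycle?

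Finally I would scale all weights by $y$ to make them integers of magnitude $O(n)$ and negate them, turning positive weight directed cycles into negative weight directed cycles. Goldberg's shortest-path algorithm~\cite{goldberg} detects a negative cycle in such a graph in $O(m\sqrt{n})$ time, which yields the claimed bound. The one point that needs care is verifying that both reductions (perfect matching to alternating cycle via the zero-weight base $S$, and alternating cycle to directed cycle via the bipartite orientation) are weight-preserving and bijective on the relevant structures, so that no positive-weight witness is created or destroyed; this is immediate from the zero-weight property of $S$ and the fact that the orientation rule forces consecutive edges along any directed walk to alternate between $S$ and non-$S$.
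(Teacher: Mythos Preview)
Your proposal is correct and follows essentially the same route as the paper: reduce via Lemma~\ref{lem1} to positive-weight perfect matchings in $H_{(M,k)}$, use the zero-weight reference matching $S$ to pass to alternating cycles, exploit bipartiteness of the \textsc{mp} instance to orient edges and turn alternating cycles into directed cycles, and then invoke Goldberg's algorithm. The only cosmetic differences are that you spell out the bipartition $H_1,H_2$ explicitly and make the weight-scaling and negation steps before applying Goldberg explicit, whereas the paper leaves these implicit.
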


\begin{theorem} \label{thm2}
Given an \textsc{mp} instance $I$ and a matching $M$ of $I$, there is an algorithm to compute $u(M)$ in $O(m\sqrt{n}\log n)$ time.
\end{theorem}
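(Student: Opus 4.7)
The plan is to imitate the proof of Theorem~\ref{thm1} verbatim, substituting the faster MP routine from Lemma~\ref{lem3} for the RP routine of Lemma~\ref{lem2}. As before, whenever $u(M)$ is finite it equals $\phi(M',M)/\phi(M,M')$ for some competing matching $M'$, and both of these counts are integers with $\phi(M',M)\in[0,n-1]$ and $\phi(M,M')\in[1,n]$; so the set of possible finite values of $u(M)$ is a subset of $\{x/y : 0\le x\le n-1,\; 1\le y\le n\}$, which has cardinality $O(n^2)$.

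First I would handle the $\infty$ case explicitly by invoking Lemma~\ref{lem3} with $k=n-1$; if $u(M)>n-1$, then by the bound just noted $u(M)$ cannot be any finite rational of the admissible form, so we output $\infty$. Otherwise I would run a binary search over the admissible rationals: maintain an interval $[L,R]$ with $L=0$ and $R=n-1$ initially, and at each round query Lemma~\ref{lem3} on the midpoint $k=(L+R)/2$, updating $L$ or $R$ accordingly. After $O(\log n)$ rounds the interval has length less than $1/n^2$, so it contains at most one rational $x/y$ with $y\le n$; this unique candidate is the exact value of $u(M)$, and a final call to Lemma~\ref{lem3} verifies it. Each of the $O(\log n)$ calls costs $O(m\sqrt{n})$ by Lemma~\ref{lem3}, for a total of $O(m\sqrt{n}\log n)$.

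The main obstacle is essentially nonexistent here, because all the conceptual work has already been done: the bipartite reduction from positive-weight perfect matchings of $H_{(M,k)}$ to positive-weight directed cycles, and the use of Goldberg's shortest-path algorithm to detect such cycles in $O(m\sqrt{n})$ time, are packaged inside Lemma~\ref{lem3}. The only minor subtlety is avoiding an explicit $O(n^2\log n)$ sort of the candidate rationals, which is why I would use the rational binary search described above rather than binary search over a precomputed sorted list.
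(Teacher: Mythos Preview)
Your proposal is correct and follows essentially the same route as the paper: binary search over the $O(n^2)$ candidate rationals, making $O(\log n)$ calls to Lemma~\ref{lem3} at $O(m\sqrt{n})$ each. You are in fact more explicit than the paper about how to carry out the binary search without an $O(n^2\log n)$ sort of the candidates.

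One small technical point: your midpoints $k=(L+R)/2$ can have denominators as large as $2^{O(\log n)}=\text{poly}(n)$, so they fall outside the literal hypothesis $y\in[1,n]$ of Lemma~\ref{lem3}. This is not a real obstacle---after clearing denominators the edge weights of $H_{(M,k)}$ are still integers of magnitude $\text{poly}(n)$, and the positive-cycle detection via Goldberg's algorithm runs in the same time (the paper makes exactly this observation in the weighted section)---but you should say so rather than invoke Lemma~\ref{lem3} verbatim.
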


\section{Weighted Setting}
The previous section shows the algorithm to compute an unpopularity factor of a given matching in an unweighted \textsc{rp} or \textsc{mp} instance where every person has equal voting weight. However, in many real-world situations, people may have different voting weights based on position, seniority, etc. Our algorithm can also be slightly modified to support a weighted instance with integer weights bounded by $N = \text{poly}(n)$ with the same running time in both \textsc{rp} and \textsc{mp}.

In the weighted setting, each person $a_i \in A$ has a weight $w(a_i)$. We analogously define $\phi(M,M')$ to be the sum of weights of people who strictly prefer a matching $M$ to a matching $M'$, i.e.
$$\phi(M,M') = \sum_{a \in A_{(M,M')}} w(a),$$
where ${A_{(M,M')} = \{a \in A|r_a(M(a)) < r_a(M'(a))\}}$. We also define $\Delta(M,M')$ and $u(M)$ the same way as in the unweighted setting. For each $a_i \in A$, we assume that $w(a_i)$ is a non-negative integer not exceeding $N = \text{poly}(n)$. Note that an unweighted instance can be viewed as a special case of a weighted instance where $w(a_i) = 1$ for all $a_i \in A$.

To support the weighted setting, we construct an auxiliary graph $H_{(M,k)}$ with the same set of vertices and edges as in the unweighted setting, but with slightly different weights of the edges. For each pair of $i$ and $j$ with an edge $\{a_i,a_j\}$, define
$$
\delta_{i,j} = \begin{cases}
w(a_i), &\text{if } a_i \text{ is unmatched in } M \text{ or } a_i \text{ prefers } a_j \text{ to } M(a_i); \\
-kw(a_i), &\text{if } a_i \text{ prefers } M(a_i) \text{ to } a_j; \\
0, &\text{if } \{a_i,a_j\} \in M \text{ or } a_i \text{ likes } a_j \text{ and } M(a_i) \text{ equally}.
\end{cases}
$$
For each pair of $i$ and $j$, the weights of $\{a_i,a_j\}$ and $\{a'_i,a'_j\}$ is $\delta_{i,j}+\delta_{j,i}$. Finally, for each edge $\{a_i,a'_i\}$, we set its weight to be $-2kw(a_i)$ if $a_i$ is matched in $M$, and 0 otherwise.

The auxiliary graph $H_{(M,k)}$ still has the same relation with $u(M)$, as shown in the following lemma.

\begin{lemma} \label{lemw1}
In the weighted \textsc{rp} instance, $u(M) > k$ if and only if $H_{(M,k)}$ contains a positive weight perfect matching.
\end{lemma}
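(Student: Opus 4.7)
The plan is to follow the proof of Lemma \ref{lem1} essentially verbatim, with the only change being that each ``$+1$'' contribution to the edge weights gets replaced by ``$+w(a_i)$,'' and every occurrence of a set cardinality $|A^{\pm}_{\star}(M')|$ in the weight computation is replaced by the corresponding weighted sum $\sum_{a_i\in A^{\pm}_{\star}(M')}w(a_i)$. Since $\phi$ has been redefined in the weighted setting as precisely such a weighted sum, the chain of equalities used in the unweighted proof should carry over unchanged.

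More concretely, I would introduce the same auxiliary sets $A_1(M'), A_2(M'), A^+_1(M'), A^-_1(M'), A^-_2(M')$ as in the proof of Lemma \ref{lem1} and observe that, from the new definition of $\phi$ in the weighted setting,
\[
\phi(M',M)=\sum_{a_i\in A^+_1(M')}w(a_i),\qquad
\phi(M,M')=\sum_{a_i\in A^-_1(M')}w(a_i)+\sum_{a_i\in A^-_2(M')}w(a_i).
\]
For the forward direction, given a witness $M_0$ with $\phi(M_0,M)>k\phi(M,M_0)$, I form the same perfect matching
\[
S_0=M_0\cup\{\{a'_i,a'_j\}\mid\{a_i,a_j\}\in M_0\}\cup\{\{a_i,a'_i\}\mid a_i\text{ unmatched in }M_0\}
\]
of $H_{(M,k)}$ and compute its total weight. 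Each edge $\{a_i,a_j\}\in M_0$ and its primed copy $\{a'_i,a'_j\}$ together contribute $2(\delta_{i,j}+\delta_{j,i})$, which equals $2w(a_i)$, $-2kw(a_i)$, or $0$ according to whether $a_i\in A^+_1(M_0)$, $A^-_1(M_0)$, or neither (and similarly for $a_j$). Each self-loop $\{a_i,a'_i\}$ with $a_i\in A^-_2(M_0)$ contributes $-2kw(a_i)$. Summing, the weight of $S_0$ is exactly $2\bigl(\phi(M_0,M)-k\phi(M,M_0)\bigr)$, which is positive by assumption.

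For the backward direction, given a positive weight perfect matching $S_1$, I split it into $M_1=\{\{a_i,a_j\}\in S_1\}$ and $M_2=\{\{a_i,a_j\}\mid\{a'_i,a'_j\}\in S_1\}$, note that $A_2(M_1)=A_2(M_2)$ because the self-loop edges in $S_1$ force the same set of ``unmatched'' indices on the two sides, and carry out the same weighted accounting as above to obtain
\[
0<w(S_1)=\bigl(\phi(M_1,M)-k\phi(M,M_1)\bigr)+\bigl(\phi(M_2,M)-k\phi(M,M_2)\bigr).
\]
Hence one of $M_1,M_2$ is a witness that $u(M)>k$.

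I do not anticipate a real obstacle here: the edge weights of $H_{(M,k)}$ have been rescaled precisely by the factor $w(a_i)$ needed to make the two lines of bookkeeping (the contribution of each person to the weight of $S_0$, and the contribution of each person to $\phi(\cdot,\cdot)$) agree. The only mild nuisance is to double-check that ``$a_i$ likes $a_j$ and $M(a_i)$ equally'' and ``$\{a_i,a_j\}\in M$'' correctly produce zero contribution in both the edge-weight sum and the $\phi$ sum, which is immediate from the case split in the definition of $\delta_{i,j}$.
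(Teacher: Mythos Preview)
Your proposal is correct and follows essentially the same approach as the paper's own proof: both introduce the same auxiliary sets $A_1(M'), A_2(M'), A^+_1(M'), A^-_1(M'), A^-_2(M')$, replace cardinalities by weighted sums $w(\cdot)$, and carry out the identical two-direction accounting to obtain $W_0 = 2(\phi(M_0,M)-k\phi(M,M_0))$ and $W_1 = (\phi(M_1,M)-k\phi(M,M_1)) + (\phi(M_2,M)-k\phi(M,M_2))$. The paper's writeup is slightly terser but the structure and all key steps match yours.
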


\begin{proof}
The proof of this lemma is very similar to that of Lemma \ref{lem1}. We define the sets $A_1(M')$, $A_2(M')$, $A^+_1(M')$, $A^-_1(M')$, and $A^-_2(M')$ by the same way as in the proof of Lemma \ref{lem1}. However, from now on we will compute the sum of weights of each set's elements instead of counting the number of its elements.

For any set $B$, define $w(B) = \sum_{a \in B} w(a)$. We have $\phi(M',M) = w(A^+_1(M'))$ and $\phi(M,M') = w(A^-_1(M'))+w(A^-_2(M'))$.
						
Suppose that $u(M) > k$. There must exist a matching $M_0$ such that $\phi(M_0,M) > k\phi(M,M_0)$. Similarly to the proof of Lemma \ref{lem1}, in the graph $H_{(M,k)}$ consider a perfect matching
$$S_0 = M_0 \cup \{\{a'_i,a'_j\}|\{a_i,a_j\} \in M_0\} \cup \{\{a_i,a'_i\}|a_i\text{ is unmatched in }M_0\}$$
with weight $W_0$. From the definition, we have
\begin{align*}
W_0 &= 2\left(w(A^+_1(M_0))-kw(A^-_1(M_0))\right)-2kw(A^-_2(M_0)) \\
&= 2\left(w(A^+_1(M_0))-k\left(w(A^-_1(M_0))+w(A^-_2(M_0))\right)\right) \\
&= 2(\phi(M_0,M) - k\phi(M,M_0)) \\
&> 0,
\end{align*}
hence $H_{(M,k)}$ contains a positive weight perfect matching.
						
On the other hand, suppose there is a positive weight perfect matching $S_1$ of $H_{(M,k)}$ with weight $W_1$. Let $M_1 =$ $\{\{a_i,a_j\} \in S_1\}$ and $M_2 = \{\{a_i,a_j\}|\{a'_i,a'_j\} \in S_1\}$. Similarly to the proof of Lemma \ref{lem1}, we have $A_2(M_1) = A_2(M_2)$, and
\begin{align*}
0 &< W_1\\
&= \left(w(A^+_1(M_1))-kw(A^-_1(M_1))\right) + \left(w(A^+_1(M_2))-kw(A^-_1(M_2))\right) - 2kw(A^-_2(M_1)) \\
&= \left(w(A^+_1(M_1))-kw(A^-_1(M_1))\right) + \left(w(A^+_1(M_2))-kw(A^-_1(M_2))\right) \\
&~~~~~~~~~ - kw(A^-_2(M_1))-kw(A^-_2(M_2)) \\
&= (\phi(M_1,M) - k\phi(M,M_1)) + (\phi(M_2,M) - k\phi(M,M_2)).
\end{align*}
Therefore, we have either $\phi(M_1,M) > k\phi(M,M_1)$ or $\phi(M_2,M) > k\phi(M,M_2)$, which implies $u(M) > k$.
\end{proof}

Since the weights of people are bounded by $N = \text{poly}(n)$, the unpopularity factor $u(M)$ must be in the form $k = x/y$, where $x$ and $y$ are integers not exceeding $Nn$. For a given value of $k$, if we multiply the weights of all edges of $H_{(M,k)}$ by $y$, they will be integers with magnitude $O(Nn) = \text{poly}(n)$. Therefore, we can still use the algorithm of Duan et al. \cite{duan} to find a maximum weight perfect matching of $H_{(M,k)}$ with the same running time.

Moreover, there are at most $O(N^2n^2)$ possible values of $u(M)$. By performing a binary search on the value of $k$, we have to run the above algorithm for $O(\log N^2n^2) = O(\log n)$ times as in the unweighted setting, hence the total running time is still $O(m\sqrt{n}\log^2 n)$.

The argument for \textsc{mp} instances still works for the weighted setting as well since $H_{(M,k)}$ is still bipartite, hence we have the following theorems for the weighted setting \textsc{rp} and \textsc{mp}.

\begin{theorem} \label{thmw1}
Given a weighted \textsc{rp} instance $I$ with integer weights bounded by $N = \text{poly}(n)$ and a matching $M$ of $I$, there is an algorithm to compute $u(M)$ in $O(m\sqrt{n}\log^2 n)$ time.
\end{theorem}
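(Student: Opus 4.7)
The plan is to combine Lemma~\ref{lemw1} with a binary search over candidate values of $k$, exactly mirroring the argument used to prove Theorem~\ref{thm1} in the unweighted setting and relying on the polynomial bound $N = \text{poly}(n)$ on the weights to keep the candidate set of polynomial size.

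First I would pin down the possible finite values of $u(M)$. Because each weight $w(a_i)$ is a non-negative integer at most $N$, and at most $n$ people contribute to each of $\phi(M',M)$ and $\phi(M,M')$, both quantities are integers in $[0,Nn]$. Hence whenever $u(M)$ is finite it equals a fraction $x/y$ with integers $x,y \in [0,Nn]$, yielding at most $O(N^2 n^2) = \text{poly}(n)$ possible values.

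Next, for a fixed candidate $k = x/y$, Lemma~\ref{lemw1} reduces the question ``is $u(M) > k$?'' to detecting a positive weight perfect matching in $H_{(M,k)}$. The graph has $O(n)$ vertices and $O(m)$ edges, and after scaling every edge weight by $y$ all weights become integers of magnitude $O(Nn) = \text{poly}(n)$, so Duan et al.'s maximum weight perfect matching algorithm~\cite{duan} decides this in $O(m\sqrt{n}\log n)$ time.

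Finally I would binary search over the candidates to locate the smallest $k$ with $u(M) \leq k$, or report $u(M) = \infty$ if no such $k$ exists. Because the number of candidates is $\text{poly}(n)$, the search uses $O(\log N^2 n^2) = O(\log n)$ calls to the decision procedure, for a total of $O(m\sqrt{n}\log^2 n)$. The one implementation subtlety, which I expect to be the main (mild) obstacle, is to run the binary search without explicitly materializing the candidate list, whose size can exceed the target running time; binary searching the real interval $[0,Nn]$ down to precision $1/(Nn)^2$, the minimum spacing between two distinct candidates $x/y$, still requires only $O(\log n)$ steps, after which a Stern--Brocot style lookup extracts the unique candidate in the resulting interval.
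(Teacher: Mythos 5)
Your proposal is correct and takes essentially the same route as the paper: use Lemma~\ref{lemw1} to reduce the query ``is $u(M) > k$?'' to detecting a positive weight perfect matching in $H_{(M,k)}$, scale the edge weights by $y$ so that the algorithm of Duan et al.~\cite{duan} runs in $O(m\sqrt{n}\log n)$ time, and binary search over the $O(N^2n^2)$ candidate values $x/y$ in $O(\log n)$ rounds. Your final remark about searching the real interval to precision below the minimum gap between candidates and then recovering the exact fraction, instead of materializing the candidate list, is a sound implementation detail that the paper leaves implicit.
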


\begin{theorem} \label{thmw2}
Given a weighted \textsc{mp} instance $I$ with integer weights bounded by $N = \text{poly}(n)$ and a matching $M$ of $I$, there is an algorithm to compute $u(M)$ in $O(m\sqrt{n}\log n)$ time.
\end{theorem}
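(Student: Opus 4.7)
The plan is to combine Lemma \ref{lemw1} with the bipartite alternating-cycle reduction used for the unweighted \textsc{mp} case in Section 3.2, and then binary-search over the possible values of $u(M)$ exactly as in the proof of Theorem \ref{thmw1}. Since only the edge weights of $H_{(M,k)}$ differ from the unweighted setting, the main task is to verify that the running time per test is unaffected by the weighting.

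First, I would observe that $H_{(M,k)}$ in the weighted \textsc{mp} instance is still bipartite with the same bipartition as in the unweighted case, because the vertex set and edge set of $H_{(M,k)}$ are defined identically; only the edge weights depend on $w$. The matching
$$S = M \cup \{\{a'_i,a'_j\}|\{a_i,a_j\} \in M\} \cup \{\{a_i,a'_i\}|a_i\text{ is unmatched in }M\}$$
is still a zero-weight perfect matching of $H_{(M,k)}$, since every $\delta_{i,j}$ associated with a pair in $M$ or with an unmatched person is $0$, and each edge $\{a_i,a'_i\}$ corresponding to an unmatched person has weight $0$. Hence, as in Section 3.2, $H_{(M,k)}$ has a positive weight perfect matching if and only if it has a positive weight alternating cycle with respect to $S$; orienting the edges of $S$ toward one side of the bipartition and all other edges toward the other side converts the problem into detecting a positive weight directed cycle in a bipartite digraph.

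Second, I would check that the edge weights remain polynomially bounded so that Goldberg's algorithm \cite{goldberg} still runs in $O(m\sqrt{n})$ time. Since $u(M)$ is a ratio of two sums of the weights $w(a_i)$, its finite candidate values take the form $k = x/y$ with integers $x, y \in [1, Nn]$. Multiplying all edge weights of $H_{(M,k)}$ by $y$ makes every weight an integer of absolute value at most $O((x+y)N) = O(N^2 n) = \text{poly}(n)$. Hence Goldberg's shortest-path algorithm detects a positive weight directed cycle in $O(m\sqrt{n})$ time, giving a test ``is $u(M) > k$?'' of cost $O(m\sqrt{n})$ per value of $k$.

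Finally, I would binary-search over the $O(N^2 n^2)$ candidate values of $u(M)$, declaring $u(M) = \infty$ if the test answers ``yes'' at the largest finite candidate. This uses $O(\log(N^2 n^2)) = O(\log n)$ calls to the per-$k$ test, yielding the claimed $O(m\sqrt{n}\log n)$ total running time. The step I expect to be the most delicate is the magnitude bookkeeping that guarantees the scaled integer weights stay $\text{poly}(n)$ so that Goldberg's algorithm retains its $O(m\sqrt{n})$ guarantee; once this is confirmed, the rest of the argument is a direct transfer of the unweighted \textsc{mp} analysis combined with Lemma \ref{lemw1}.
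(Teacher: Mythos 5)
Your proposal matches the paper's own argument essentially step for step: Lemma \ref{lemw1} plus the bipartite alternating-cycle/directed-cycle reduction of Section 3.2 (with the zero-weight perfect matching $S$), Goldberg's algorithm as the per-$k$ test, and a binary search over the $O(N^2n^2)$ candidate ratios $x/y$ with $x,y \leq Nn$, giving $O(m\sqrt{n}\log n)$ overall. Your bookkeeping of the scaled edge-weight magnitude ($O(N^2n)$, still $\text{poly}(n)$) is if anything slightly more careful than the paper's, so no gap to report.
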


\section{Concluding Remarks}
We develop an algorithm to compute the unpopularity factor of a given matching in $O(m\sqrt{n}$ $\log^2 n)$ time for \textsc{rp} and $O(m\sqrt{n}\log n)$ time for \textsc{mp}, which runs only slightly slower than the algorithm of McCutchen \cite{mccutchen} to solve the same problem in \textsc{hap} and the algorithm of Bir\'{o} et al. \cite{biro} to compute the unpopularity margin of a given matching in \textsc{rp} and \textsc{mp}. Our results also complete Tables \ref{table1} and \ref{table2}, which show the running time of the currently best known algorithms related to popularity in \textsc{rp}, \textsc{mp}, and \textsc{hap} in the unweighted setting with strict preference lists, and with ties allowed, respectively. In both tables, $m$ is the total length of preference lists, $n$ is the total number of people and items, $n_2$ is the number of items (for \textsc{hap}), and $g$ is the unpopularity margin of a given matching.

While the problem of finding a matching that minimizes the unpopularity factor or margin in a given matching is NP-hard, the problem of approximating the optimum of either measure is still open. For the unpopularity factor in \textsc{rp} with strict preference lists, the current best algorithm is the one developed by Huang and Kavitha \cite{huang}, which approximates it up to $O(\log n)$ factor. A possible future work is to investigate whether there is a better approximation algorithm for \textsc{rp}, or to develop one for \textsc{hap}. For the unpopularity margin, however, there is currently no efficient algorithm to approximate the optimum, both in \textsc{rp} and \textsc{hap}. Another possible future work is to investigate the probability of existence of a popular matching in \textsc{rp} where each person's preference list is independently and uniformly generated at random, similarly to the study of Mahdian \cite{mahdian}, and Ruangwises and Itoh \cite{ruangwises} in \textsc{hap}.

\begin{table}[H]
	\centering
	\begin{tabular}{|c|c|c|c|}
		\hline
		\multirow{2}{*}{} & \multicolumn{2}{c|}{\textbf{Two-sided Lists}} & \textbf{One-sided Lists} \\ \cline{2-4}
		& \textbf{\thead{Roommates\\ Problem (\textsc{rp})}} & \textbf{\thead{Marriage\\ Problem (\textsc{mp})}} & \textbf{\thead{House Allocation\\ Problem (\textsc{hap})}} \\ \hline
		\thead{Determine if a popular\\ matching exists} & \multirow{2}{*}{\thead{\\ NP-hard \cite{faenza,gupta}}} & \multirow{3}{*}{\thead{\\ \\ $O(m)$ \cite{gardenfors}}} & $O(m+n)$ \cite{abraham} \\ \cline{1-1} \cline{4-4}
		\thead{Find a matching $M$\\ that minimizes $g(M)$} & & & \multirow{2}{*}{\thead{\\ NP-hard \cite{mccutchen}}} \\ \cline{1-2}
		\thead{Find a matching $M$\\ that minimizes $u(M)$} & NP-hard \cite{huang} & & \\ \hline
		\thead{Test popularity\\ of a given matching $M$} & \multirow{2}{*}{\thead{\\ $O(m\sqrt{n}\log n)$ \cite{biro,duan}}} & \multirow{2}{*}{\thead{\\ $O(m\sqrt{n})$ \cite{biro}}} & $O(m+n)$ \cite{abraham} \\ \cline{1-1} \cline{4-4}
		\thead{Compute $g(M)$\\ of a given matching $M$} & & & $O((g+1)m\sqrt{n})$ \cite{mccutchen} \\ \hline
		\thead{Compute $u(M)$\\ of a given matching $M$} & \boldmath{$O(m\sqrt{n}\log^2 n)$} \textbf{[\S3]} & \boldmath{$O(m\sqrt{n}\log n)$} \textbf{[\S3]} & $O(m\sqrt{n_2})$ \cite{mccutchen} \\ \hline
	\end{tabular}
	\medskip
	\caption{Currently best known algorithms for an unweighted instance with strict preference lists}\label{table1}
\end{table}

\begin{table}[H]
	\centering
	\begin{tabular}{|c|c|c|c|}
		\hline
		\multirow{2}{*}{} & \multicolumn{2}{c|}{\textbf{Two-sided Lists}} & \textbf{One-sided Lists} \\ \cline{2-4}
		& \textbf{\thead{Roommates\\ Problem (\textsc{rp})}} & \textbf{\thead{Marriage\\ Problem (\textsc{mp})}} & \textbf{\thead{House Allocation\\ Problem (\textsc{hap})}} \\ \hline
		\thead{Determine if a popular\\ matching exists} & \multicolumn{2}{c|}{\multirow{3}{*}{\thead{\\ \\ NP-hard \cite{biro}}}} & $O(m\sqrt{n})$ \cite{abraham} \\ \cline{1-1} \cline{4-4}
		\thead{Find a matching $M$\\ that minimizes $g(M)$} & \multicolumn{2}{c|}{} & \multirow{2}{*}{\thead{\\ NP-hard \cite{mccutchen}}} \\ \cline{1-1}
		\thead{Find a matching $M$\\ that minimizes $u(M)$} & \multicolumn{2}{c|}{} & \\ \hline
		\thead{Test popularity\\ of a given matching $M$} & \multirow{2}{*}{\thead{\\ $O(m\sqrt{n}\log n)$ \cite{biro,duan}}} & \multirow{2}{*}{\thead{\\ $O(m\sqrt{n})$ \cite{biro}}} & $O(m\sqrt{n_2})$ \cite{mccutchen} \\ \cline{1-1} \cline{4-4}
		\thead{Compute $g(M)$\\ of a given matching $M$} & & & $O((g+1)m\sqrt{n})$ \cite{mccutchen} \\ \hline
		\thead{Compute $u(M)$\\ of a given matching $M$} & \boldmath{$O(m\sqrt{n}\log^2 n)$} \textbf{[\S3]} & \boldmath{$O(m\sqrt{n}\log n)$} \textbf{[\S3]} & $O(m\sqrt{n_2})$ \cite{mccutchen} \\ \hline
	\end{tabular}
	\medskip
	\caption{Currently best known algorithms for an unweighted instance with ties allowed in the preference lists}\label{table2}
\end{table}

\newpage

\end{document}